\newcounter{cptRef}
\newcounter{cptTh}
\newtheorem{theorem}[cptTh]{Theorem}
\newtheorem{dfn}[cptTh]{Definition}
\newtheorem{obs}[cptTh]{Observation}
\newtheorem{notation}[cptTh]{Notation}
\def\Z{{\mathbb Z}}
\def\set#1{\left\{#1\right\}}
\def\abs#1{\left|#1\right|}
\newcommand{\ie}{i.\,e.\ }
\def\NULL{\ensuremath{\text{\tt NULL}}}
\begin{document}

\title{Group Connectivity: $\Z_4$ v.~$\Z_2^2$}

\author{%
  Radek Hušek\and
  Lucie Mohelníková\and
  Robert Šámal\thanks{
All authors are members of Computer Science Institute of Charles University,
Prague, Czech Republic.
Email: {\tt \{husek,samal\}@iuuk.mff.cuni.cz}
}}

\date{}

\maketitle

\begin{abstract}
We answer a question on group connectivity suggested by Jaeger et al.~[Group connectivity of
graphs -- A nonhomogeneous analogue of nowhere-zero flow properties, JCTB 1992]: 
we find that $\Z_2^2$-connectivity does not imply $\Z_4$-connectivity, neither vice versa. 
We use a computer to find the graphs certifying this and to verify their properties
using non-trivial enumerative algorithm.
While the graphs are small (the largest has 15 vertices and 21~edges),
a computer-free approach remains elusive.
\end{abstract}

\section{Introduction}

A \emph{flow} in a digraph~$G=(V,E)$ is an assignment of values of some abelian group~$\Gamma$ to 
edges of~$G$ such that Kirchhoff's law is valid at every vertex. 
Formally, $\varphi\colon E \to \Gamma$ satisfies 
$$
   \sum_{\hbox{\footnotesize $e$ ends at~$v$}} \varphi(e) = \sum_{\hbox{\footnotesize $e$ starts at~$v$}} \varphi(e)
$$
for every vertex~$v \in V$. 
We say a flow is \emph{nowhere-zero} if it does not use value~$0$ at any edge. 

Tutte~\cite{Tutte} started the study of nowhere-zero flows by observing, that 
a plane digraph~$G$ has a nowhere-zero flow in~$\Z_k$ if and only if its plane dual~$G^*$ is $k$-colorable
(we do not consider orientation of the edges for the coloring). 
This motivated several famous conjectures, we mention just the 5-flow conjecture (due to Tutte): 
every bridgeless graph has a nowhere-zero flow in~$\Z_5$. 
A motivating feature of the theory of nowhere-zero flows are several nice properties, starting 
with the ones discovered by Tutte. In particular: 

\begin{theorem}[Tutte~'54~\cite{Tutte}] \label{Tuttegroup} 
Let $\Gamma$ be an abelian group with $k$-elements. Then for every digraph 
the existence of a nowhere-zero $\Gamma$-flow is equivalent with the existence of 
a nowhere-zero $\Z_k$-flow. 
\end{theorem}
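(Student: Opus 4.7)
The plan is to show that the number of nowhere-zero $\Gamma$-flows on a fixed digraph~$G$ depends only on $|\Gamma|$, not on the algebraic structure of~$\Gamma$. From this the theorem follows immediately: if one $k$-element group gives a positive count, so does every other.

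First I would count \emph{all} $\Gamma$-flows, zero values allowed. Fix a spanning forest $F$ of~$G$; for any assignment $\varphi\colon E(G)\setminus F \to \Gamma$ one may extend $\varphi$ uniquely to a $\Gamma$-flow on~$G$. This is the standard leaf-stripping argument: at any leaf of $F$ within some tree component, Kirchhoff's law at the leaf has a unique solution in the leaf edge's value, and removing the leaf leaves a smaller forest on which we iterate. The point is that this argument only uses that $\Gamma$ is a group (solving $x = a$ has a unique $x$ for each $a$), not that it is a vector space over anything. Hence the total number of $\Gamma$-flows is
\[
  F(G,\Gamma) \;=\; |\Gamma|^{|E(G)|-|V(G)|+c(G)},
\]
where $c(G)$ is the number of connected components.

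Next I would apply inclusion-exclusion over the set of edges on which a flow may (accidentally) take the value~$0$. For any $S\subseteq E(G)$, flows that vanish on $S$ are in bijection with arbitrary $\Gamma$-flows on $G-S$ (extend by zero on~$S$, and conversely restrict). Therefore the number $F^*(G,\Gamma)$ of nowhere-zero $\Gamma$-flows equals
\[
  F^*(G,\Gamma) \;=\; \sum_{S\subseteq E(G)} (-1)^{|S|}\, F(G-S,\Gamma)
  \;=\; \sum_{S\subseteq E(G)} (-1)^{|S|}\, |\Gamma|^{|E(G)|-|S|-|V(G)|+c(G-S)}.
\]
The right-hand side is a polynomial in the single variable $k=|\Gamma|$, with coefficients depending only on the graph~$G$. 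In particular, for any two abelian groups $\Gamma_1, \Gamma_2$ of the same order~$k$ we have $F^*(G,\Gamma_1)=F^*(G,\Gamma_2)$, and taking $\Gamma_2=\Z_k$ finishes the proof.

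The only step that requires a little care is the claim that $F(G,\Gamma)=|\Gamma|^{|E|-|V|+c(G)}$ for \emph{every} abelian group~$\Gamma$; this is where one must resist the temptation to think of flows as living in a vector space. Once the leaf-stripping argument is stated correctly, inclusion-exclusion is routine. Note that this proof gives more than the theorem stated: it shows that the \emph{count} of nowhere-zero flows, not merely their existence, is determined by $|\Gamma|$.
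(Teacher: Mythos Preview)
Your proof is correct and is essentially the classical argument: count all $\Gamma$-flows via a spanning forest, then obtain the nowhere-zero count by inclusion--exclusion over the set of edges where the flow is allowed to vanish, yielding a polynomial in $|\Gamma|$ (the flow polynomial). The bijection between flows on $G$ vanishing on $S$ and flows on $G-S$ is exactly right, and the leaf-stripping argument for the total count is stated with the proper care.

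As for comparison: the paper does not actually prove this theorem. It is quoted as a classical result of Tutte (1954) and used as background, with no argument supplied. Your proof is in fact the standard one, and indeed close to Tutte's original approach; the stronger conclusion you note---that the \emph{number} of nowhere-zero flows, not merely their existence, depends only on $|\Gamma|$---is precisely the content of Tutte's flow polynomial.
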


\begin{theorem}[Tutte~'54~\cite{Tutte}] \label{Tutteint} 
The existence of $\Z_k$-flow is equivalent with the existence of a nowhere-zero integer flow, that uses only values 
$\pm 1$, $\pm 2$, \dots, $\pm (k-1)$. 
\end{theorem}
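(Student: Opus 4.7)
The backward direction is immediate by reduction modulo $k$: any value in $\set{\pm 1,\dots,\pm(k-1)}$ is nonzero modulo $k$, so reducing a nowhere-zero integer flow modulo $k$ yields a nowhere-zero $\Z_k$-flow.

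For the forward direction, starting from a nowhere-zero $\Z_k$-flow $\varphi$ on some orientation of $G$, the plan is to construct an integer flow $\psi$ with $\psi(e) \equiv \varphi(e) \pmod{k}$ and $|\psi(e)| \le k-1$. First I would exhibit \emph{some} integer lift: let $\tilde\varphi(e) \in \set{1,\dots,k-1}$ be the canonical integer representative of $\varphi(e)$; this $\tilde\varphi$ need not be a flow, but every vertex divergence is a multiple of $k$, so adding $k$ times an appropriate integer flow (built by pushing corrections along a spanning tree) produces an integer flow $\psi \equiv \tilde\varphi \pmod{k}$. Any such $\psi$ is automatically nowhere zero, since $\psi(e) \equiv \tilde\varphi(e) \not\equiv 0 \pmod{k}$.

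Then I would apply a minimality argument. Among all integer flows $\psi$ with $\psi \equiv \tilde\varphi \pmod{k}$, pick one that lexicographically minimizes the pair $(M,N)$, where $M = \max_e |\psi(e)|$ and $N = \abs{\set{e : |\psi(e)| = M}}$. Assuming for contradiction that $M \ge k$, pick an edge $e_0$ with $|\psi(e_0)| = M$ and form the digraph $D$ by reversing each edge on which $\psi$ is negative, so that $|\psi|$ is a nonnegative flow in $D$. A standard reachability-cut argument (Kirchhoff's law forces the set of vertices reachable from the head of $e_0$ to contain its tail, since otherwise $e_0$ would produce unbalanced inflow across that cut) yields a directed cycle $C$ through $e_0$ in $D$. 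Subtract $k$ units of flow around $C$ to obtain a new integer flow $\psi'$ with $\psi' \equiv \psi \pmod{k}$; direct computation gives $|\psi'(e)| = \bigl||\psi(e)| - k\bigr| \le \max(M-k,\,k-1) < M$ for every $e \in C$ (using $1 \le |\psi(e)| \le M$ and $k \le M$), while $|\psi'(e)| = |\psi(e)|$ off $C$. Hence $(M,N)$ strictly decreases in lexicographic order, contradicting minimality, so $M \le k-1$ as required.

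The main obstacle is choosing the right potential. The naive approach of minimizing $\sum_e |\psi(e)|$ fails, because subtracting $k$ along a cycle may \emph{increase} $|\psi|$ on edges where the flow was small, leaving the sum unchanged or worse. The lexicographic order on $(M,N)$ sidesteps this by tracking only the edges of largest absolute value, which are precisely the ones that the cycle modification is guaranteed to strictly improve.
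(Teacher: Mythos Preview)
The paper does not prove this theorem; it is stated as a classical result of Tutte with a citation and then used as a black box (e.g.\ in the discussion before Theorem~\ref{main}). So there is no proof in the paper to compare against.

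Your argument is the standard one and is essentially correct. One terminological slip: when you write ``adding $k$ times an appropriate integer flow'' to repair the divergences of $\tilde\varphi$, the object you are adding is not a flow---it has prescribed nonzero divergence $-d(v)$ at each vertex. What you actually do is construct $g\colon E\to\Z$ supported on a spanning tree with divergence $-d(v)/k$ at each $v$ (possible since $\sum_v d(v)=0$), and set $\psi=\tilde\varphi+kg$. The remainder of the argument---the lexicographic minimization of $(M,N)$ and the cycle-subtraction step---is clean: the bound $\bigl||\psi(e)|-k\bigr|\le\max(M-k,\,k-1)<M$ is exactly what is needed, and the reachability claim (every positive-value arc in a nonnegative circulation lies on a directed cycle of positive-value arcs) is standard.
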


Jaeger et~al.~\cite{JLPT} introduced a variant of nowhere-zero flows called \emph{group connectivity}. 
A digraph~$G = (V,E)$ is \emph{$\Gamma$-connected} if for every mapping $h \colon E \to \Gamma$
there is a $\Gamma$-flow~$\varphi$ on~$G$ that satisfies $\varphi(e) \ne h(e)$ for every edge~$e \in E$. 
As we may choose the ``forbidden values'' $h \equiv 0$, every $\Gamma$-connected digraph has a nowhere-zero $\Gamma$-flow; 
the converse is false, however. While the notion of group connectivity is stronger than 
the existence of nowhere-zero flows, it is also more versatile, in particular the notion lends itself 
more easily to proofs by induction. This is a consequence of an alternative definition of group connectivity: 
instead of looking for a flow, we may check existence of a mapping $E \to \Gamma$ that has prespecified 
surplus at each vertex.

It is easy to see that both the existence of a nowhere-zero $\Gamma$-flow
and $\Gamma$-connectivity do not change when we reverse the orientation of an edge of the digraph 
(we only need to change the corresponding flow value from~$x$ to $-x$). 
Thus, we will say that an undirected graph~$G$ has a nowhere-zero $\Gamma$-flow (is $\Gamma$-connected) 
if some (equivalently every) orientation of~$G$ has a nowhere-zero $\Gamma$-flow (is $\Gamma$-connected).
Also, using the definition of group connectivity working with vertex surpluses, we observe that
group connectivity is monotone with respect to edge addition -- if $G$ is $\Gamma$-connected then
$G + e$ is $\Gamma$-connected for any edge $e$.

Some results on nowhere-zero flows extend to the stronger notion of group connectivity. 
A celebrated recent example of this is the solution to the Jaeger's
conjecture by Thomassen et al.~\cite{LTWZ}, but there are many more. 
Thus, it is worthwhile to understand the properties of group connectivity in more detail.

\begin{figure}[h]
  \centering
  \includegraphics{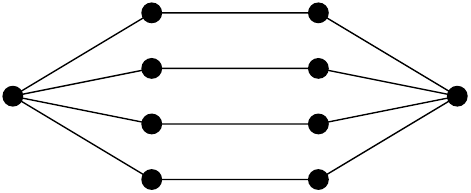}
  \caption{A graph which is $\Z_5$ but not $\Z_6$-connected\label{fig:5y6n}}
\end{figure}

However, some nice properties of group-valued flows are not shared by group connectivity. 
In particular Jaeger~\cite{JLPT} showed that there is a graph (Figure~\ref{fig:5y6n})
that is $\Z_5$-connected, but not $\Z_6$-connected. 
This contrasts with the situation for flows: Suppose $G$ has a nowhere-zero flow in $\Z_5$ but not 
in~$\Z_6$. Using Theorem~\ref{Tutteint} twice, we find that $G$ has a nowhere-zero integer flow with values bounded in 
absolute 
value by~$5$, but not one bounded by~$6$, a clear contradiction. 

An analogy of Theorem~\ref{Tuttegroup} is more subtle. Indeed, in Section~3.1 of~\cite{JLPT} the authors mention: 
``\dots\ we do not know of any $\Z_4$-connected graph which is not $\Z_2\times\Z_2$-connected, or vice versa. 
Neither can we prove that such graphs do not exist.'' 
Our main result is the resolution to this natural question. 

\begin{theorem}\label{main}~
\begin{enumerate}
\item 
  There is a graph that is $\Z_2^2$-connected but not $\Z_4$-connected. 
\item 
  There is a graph that is $\Z_4$-connected but not $\Z_2^2$-connected. 
\end{enumerate}
\end{theorem}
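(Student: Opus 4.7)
The statement is purely existential, so the plan is to exhibit two concrete graphs $G_1$ and $G_2$ and verify computationally that $G_1$ is $\Z_2^2$-connected but not $\Z_4$-connected and that $G_2$ is $\Z_4$-connected but not $\Z_2^2$-connected. Since, as the abstract notes, no computer-free approach is known, the work splits into a search phase for good candidates and a verification phase that rigorously certifies both properties of each candidate.

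For the search phase, I would enumerate small graphs in increasing order of edge count and use monotonicity with respect to edge addition (as recalled in the introduction) to concentrate on edge-minimal candidates: any supergraph of a $\Z_2^2$-connected graph is still $\Z_2^2$-connected, so a separating example should be sought among graphs that are just barely $\Z_2^2$-connected. I would further prune using Theorem~\ref{Tuttegroup}, which gives a common necessary condition (existence of a nowhere-zero $\Z_4$-flow, equivalently a nowhere-zero $\Z_2^2$-flow) that both sides of each separating example must satisfy, and restrict to graphs that are at least $3$-edge-connected, since smaller edge cuts quickly rule out group connectivity for either group.

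The harder phase is verification. Showing that $G$ is \emph{not} $\Gamma$-connected only requires exhibiting one bad assignment $h \colon E \to \Gamma$, which is cheap to present and to check. Showing that $G$ \emph{is} $\Gamma$-connected requires proving that for each of the $|\Gamma|^{|E|}$ assignments $h$ there exists a flow $\varphi$ with $\varphi(e) \ne h(e)$ everywhere. For the largest candidate ($|E| = 21$, $|\Gamma| = 4$) this amounts to roughly $4^{21} \approx 4 \cdot 10^{12}$ assignments, so iterating naively over $h$ and then over the flow space is hopeless. The plan is to design an enumerative algorithm that processes edges in an order compatible with a small vertex separator or tree decomposition, maintaining at each step, for each boundary surplus vector at the current separator, the set of partial flow/forbidden-value patterns that can still be completed; equivalently, one can use inclusion-exclusion over the set of edges on which $\varphi(e) = h(e)$, expressing the number of avoiding flows as an alternating sum of flow counts on contractions of $G$.

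The main obstacle is exactly this verification algorithm: it must be fast enough to terminate on the $21$-edge candidate within reasonable time, yet simple and structured enough that its correctness is transparent and independently re-implementable. Once the algorithm is implemented and cross-checked on small graphs where the answer is known (for instance on the example of Figure~\ref{fig:5y6n} and on graphs known to be $\Z_4$-connected from the literature), the remainder of the proof is combinatorial bookkeeping: present the two graphs, present the bad assignments $h$ that certify the negative directions, and report the algorithm's output for the two positive directions.
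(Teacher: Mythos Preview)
Your proposal has the right overall shape---search for candidates, then verify---but there is a concrete error in the search strategy that would likely cause it to fail. You propose to ``restrict to graphs that are at least $3$-edge-connected, since smaller edge cuts quickly rule out group connectivity for either group.'' This is false: a $2$-edge-cut only forces the two edges to carry equal flow values, so forbidding one value on each still leaves $|\Gamma|-2=2$ admissible values; $2$-edge-connected graphs can certainly be $\Gamma$-connected for $|\Gamma|=4$. More importantly, every example the paper actually finds is a \emph{subdivision} of a cubic graph and hence has degree-$2$ vertices and $2$-cuts; the paper explicitly leaves open whether any $3$-edge-connected separating example exists. Your restriction would discard precisely the graphs that work.

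The paper's search is more targeted than yours: rather than enumerating all small graphs, it starts from a cubic graph (motivated by the theorem that $4$-edge-connected graphs are connected over both groups) and repeatedly subdivides edges, since each subdivision strictly strengthens the group-connectivity condition. The verification algorithm also rests on an idea you do not mention: two forbidden assignments $h,h'$ that differ by a flow are equivalent (Observation~\ref{obs:forb_assgn}), so one can fix the zero flow, enumerate the $(|\Gamma|-1)^m$ assignments it satisfies, and record which of the $|\Gamma|^{n-1}$ equivalence classes gets hit; the graph is $\Gamma$-connected iff every class is hit. This avoids the per-$h$ search entirely. Your tree-decomposition or inclusion--exclusion sketch might also be made to work, but as stated it still computes something for each $h$ and does not obviously beat the naive bound, whereas the paper's reformulation is the step that makes the computation feasible.
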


Because our result is computer aided, we do not present proof in classical sense.
Instead we present overview of our approach and examples of graphs proving Theorem~\ref{main}
in the next section. In Section~\ref{sec:algo} we describe the algorithm we used to test group
connectivity, and we add some implementation notes in Section~\ref{sec:impl}.

\section{The Group Connectivity Conjecture and Results}

When looking for graphs certifying Theorem~\ref{main}, we only need to consider 
graphs that do have nowhere-zero $\Z_2^2$-flow (equivalently, by Theorem~\ref{Tuttegroup},
nowhere-zero $\Z_4$-flow). It is natural to examine cubic graphs (and their subdivisions) due to
the following theorem:

\begin{theorem}[Jaeger et al. \cite{JLPT}]
Let $G$ be an 4-edge-connected graph. Then $G$ is both $\Z_2^2$- and $\Z_4$-connected.
\end{theorem}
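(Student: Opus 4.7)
The plan is to combine two ingredients. First, the Nash--Williams/Tutte tree-packing theorem, which says that a $2k$-edge-connected graph decomposes into $k$ edge-disjoint spanning trees; in particular, a 4-edge-connected $G$ has two edge-disjoint spanning trees $T_1$ and $T_2$. Second, the surplus reformulation of group connectivity (mentioned above as an ``alternative definition''): $G$ is $\Gamma$-connected iff for every zero-sum $\beta\colon V\to\Gamma$ and every forbidden assignment $h\colon E\to\Gamma$ there exists $\varphi\colon E\to\Gamma$ with boundary $\beta$ such that $\varphi(e)\ne h(e)$ on every edge.

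I would then prove the stronger statement: any graph with two edge-disjoint spanning trees is $\Gamma$-connected for every abelian group $\Gamma$ with $\abs{\Gamma}\ge 4$. Both $\Z_4$ and $\Z_2^2$ have order four, so this subsumes the theorem. The argument proceeds by induction on $\abs{V(G)}$. For the inductive step I pick an edge $e=uv$ whose contraction preserves the two-spanning-trees property (such an edge always exists: take a leaf edge of $T_1$, or more carefully, any edge lying in exactly one of $T_1,T_2$), apply induction to $G/e$ to get a flow $\varphi'$ matching the modified surplus and avoiding $h$ on all edges other than $e$, and then read off $\varphi(e)$ from the surplus equation at $u$. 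If this forced value happens to equal $h(e)$, I modify $\varphi$ along the fundamental cycle through $e$ provided by whichever spanning tree does not contain $e$. Because $\abs{\Gamma}\ge 4$, every edge of this cycle offers at least three admissible values, which should give enough slack to complete the modification.

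The main obstacle is precisely this modification step: one must ensure that a consistent shift along the cycle avoids the forbidden value on \emph{every} edge of the cycle simultaneously, while still changing $\varphi(e)$ away from $h(e)$. Here the two edge-disjoint spanning trees are essential, as they give not merely one but two edge-disjoint $u$--$v$ paths, so a conflict appearing along one cycle can be pushed into the other. A clean way to formalize this may be via an Alon--Tarsi-style polynomial argument: the flows with boundary $\beta$ form an affine space of dimension $\abs{E}-\abs{V}+1$ over $\Gamma$, and one tries to bound the number of flows hitting a forbidden value on some edge. The two-tree decomposition should give the combinatorial leverage needed to make this counting go through for $\abs{\Gamma}\ge 4$, which is the regime where both $\Z_4$ and $\Z_2^2$ fall.
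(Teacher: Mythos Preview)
The paper does not prove this theorem; it is simply quoted from Jaeger, Linial, Payan and Tarsi~\cite{JLPT} to motivate restricting the search to cubic graphs and their subdivisions. So there is no in-paper proof to compare against, only the original source.

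Your high-level plan is exactly that of \cite{JLPT}: use Nash--Williams/Tutte to extract two edge-disjoint spanning trees from a 4-edge-connected graph, and then show that any graph containing two edge-disjoint spanning trees is $\Gamma$-connected whenever $\abs{\Gamma}\ge 4$. That reduction is correct, and the stronger statement you aim for is indeed what \cite{JLPT} prove.

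Where your argument breaks down is the modification step, and the fixes you sketch do not rescue it. After contracting $e=uv\in T_1$ and lifting, a shift by $c$ along the fundamental cycle $C=e+P_{T_2}(u,v)$ changes $\varphi$ on every edge of $P_{T_2}(u,v)$; each such edge rules out one value of $c$, and $P_{T_2}(u,v)$ may be arbitrarily long, so with $\abs{\Gamma}=4$ there is no reason a good $c$ exists. Your ``two edge-disjoint $u$--$v$ paths'' claim is not actually available: since $e\in T_1$, removing $e$ from $T_1$ disconnects $u$ from $v$, so $T_1$ contributes no second $u$--$v$ path, and the only cycle through $e$ inside $T_1\cup T_2$ is $C$ itself. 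Finally, the Alon--Tarsi route requires the Combinatorial Nullstellensatz over a field; $\Z_2^2$ is a vector space over $\Z_2$, but $\Z_4$ is not a field, so this does not handle the $\Z_4$ case without a genuinely new idea.

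The argument in \cite{JLPT} avoids any after-the-fact cycle repair. One assigns values to the $T_2$-edges directly, in an order arranged so that at the moment the boundary equation at a vertex $v$ determines the value on its incident $T_1$-edge $e$, there is still an unassigned $T_2$-edge at $v$; that last free edge is then chosen to avoid both its own forbidden value and the value that would force $\varphi(e)=h(e)$ --- two constraints, hence $\abs{\Gamma}-2\ge 2$ admissible choices. The freedom is spent locally, one constraint at a time, rather than globally along a long cycle. Your contraction framework can be made to work, but the inductive step must build in this local control (for instance by reserving a degree of freedom at $v$ before invoking the hypothesis on $G/e$), not attempt a cycle shift afterwards.
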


In contrary to the usual case, however, we
are not interested in snarks (cubic graphs that fail to be edge 3-colorable), as those 
do not have nowhere-zero $\Z_2^2$-flow.

We note that subdividing an edge has no effect on the existence of a nowhere-zero flow 
(the new edge can have the same flow value as before). It makes the group connectivity 
stronger -- in effect, we are forbidding one more value on an edge. 
This suggests the following strategy: 
\begin{enumerate}
  \item pick an arbitrary\,/\,random 3-regular graph and
  \item repeatedly subdivide an edge and check $\Z_2^2$- and $\Z_4$-connectivity.
\end{enumerate}

\begin{figure}[h]
\begin{center}
\includegraphics[width=6cm]{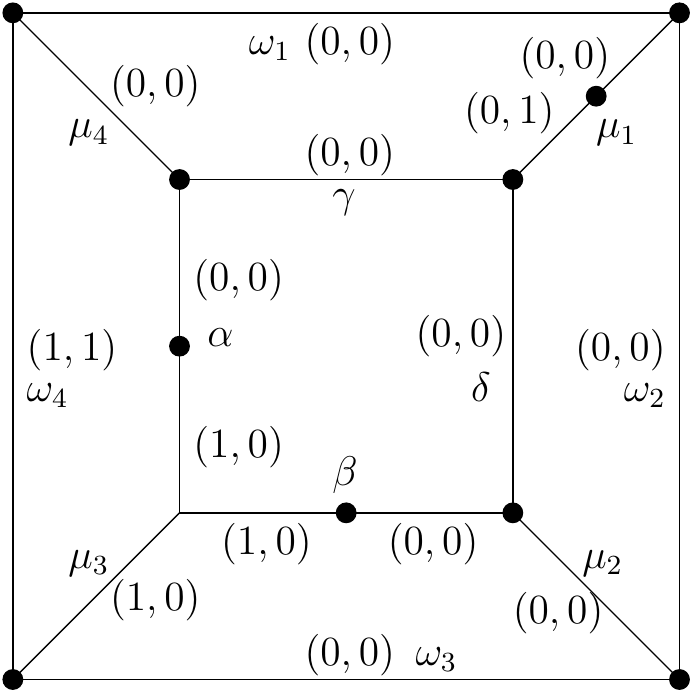}
\caption{A subdivision of cube which is $\Z_4$-connected but not $\Z_2^2$-connected with
forbidden assignment for which no satisfying $\Z_2^2$-flow exists and names for hypothetical flow values.}
\label{fig:cube_sub}
\label{fig:cube_sub_forb}
\end{center}
\end{figure}

This procedure yielded the graph in Figure~\ref{fig:cube_sub},
which appeared in the master thesis of the second author~\cite{Lysi}.
This graph is $\Z_4$- but not $\Z_2^2$-connected. 
Later, with more effective implementation (see the next section) by the first author,
we found graphs that are $\Z_2^2$- but not $\Z_4$-connected. 
The smallest among them are (threefold) subdivisions of cubic graphs on 12 vertices
(for an example see Figure~\ref{fig:graphs}).
We also include a proof that graph in Figure~\ref{fig:cube_sub} is not $\Z_2^2$-connected
which is not computer-aided:

\begin{theorem} \label{thm:non-conn}
The subdivision of cube shown in Figure~\ref{fig:cube_sub} is not $\Z_2^2$-connected.
\end{theorem}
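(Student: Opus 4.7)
The plan is to fix the forbidden assignment $h$ depicted in Figure~\ref{fig:cube_sub_forb}, assume for contradiction that a $\Z_2^2$-flow $\varphi$ avoiding $h$ exists, and derive a contradiction from Kirchhoff's law together with the hypothetical flow labels the figure provides.

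The first step is to reduce the problem to the underlying cube. Since every element of $\Z_2^2$ is self-inverse, Kirchhoff's law at a degree-2 subdivision vertex forces the two incident edges to carry the same flow value. Hence $\varphi$ descends to a $\Z_2^2$-flow on the cube in which each original edge $e$ must avoid the set $F(e) \subseteq \Z_2^2$ obtained by collecting every forbidden value that $h$ places along the subdivision segments of $e$. I would label the twelve cube edges by the variables $a,b,c,\ldots$ already used in the figure and read off $F(e)$ explicitly for each of them.

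Next I would write Kirchhoff's law at each of the eight cube vertices. At a degree-3 vertex a relation $x+y+z=0$ in $\Z_2^2$ admits only three shapes: all three values zero; one zero with two equal nonzero values; or the three distinct nonzero elements. Combined with the constraint $\varphi(e)\notin F(e)$, each corner becomes very restrictive. I would then pick a starter edge with $|F(e)|$ as large as possible (ideally forcing $\varphi(e)$ to a single admissible value), propagate the consequences around the faces incident to that edge via the corner relations, and use the fact that the four flow values on any face-cycle of the cube must sum to $0$ to close the loop.

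The main obstacle will be keeping the case analysis short enough to be humanly verifiable. With up to three choices per edge, naive branching is prohibitive, so the argument should be organised so that the initial branch already forces most edge values through chains of equalities at the corners, and a contradiction surfaces on a single face-cycle where no quadruple compatible with all four $F(e)$'s sums to zero. A minor technical point is that no orientation bookkeeping is needed, since $\Z_2^2$ is orientation-blind and Kirchhoff's law may be written as an unsigned sum throughout.
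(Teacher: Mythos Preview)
Your proposal is correct and matches the paper's approach: fix the forbidden assignment from the figure, collapse the subdivisions to a single value per cube edge (the paper does this implicitly rather than as an explicit preliminary step), and propagate constraints via Kirchhoff's law through a short case analysis. The paper's proof branches on a single flow variable into just two cases ($\alpha\in\{(0,1),(1,1)\}$) and reaches a contradiction in each within a few lines, so your concern about the branching becoming unmanageable does not materialise.
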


\begin{figure}[h]
\begin{center}
\includegraphics[width=6cm]{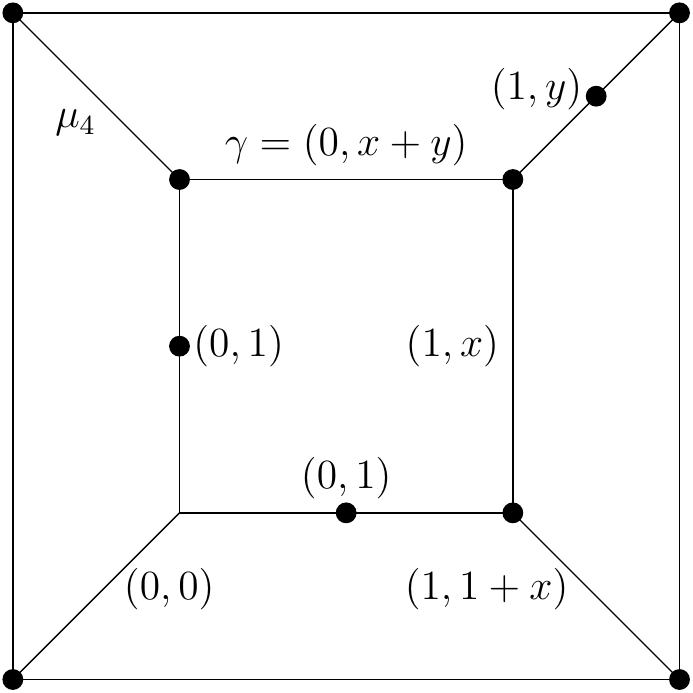}
\hskip 0pt plus 2fil
\includegraphics[width=6cm]{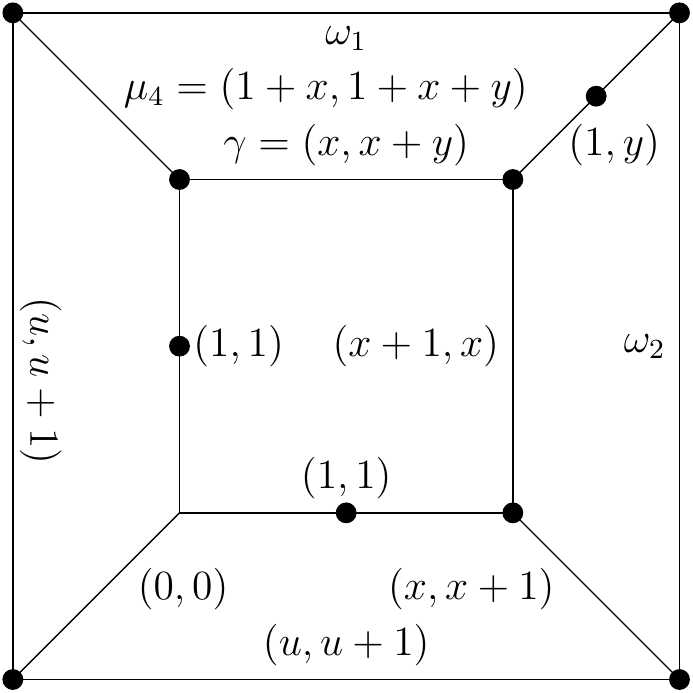}
\caption{Cases $\alpha = (0, 1)$ and $\alpha = (1,1)$ with fragments of hypothetical flows.}
\label{fig:cube_flows}
\end{center}
\end{figure}

\begin{proof}
We will show that for the assignment of the forbidden values in Figure~\ref{fig:cube_sub_forb} there exists
no satisfying $\Z_2^2$-flow. First observe that values $\alpha$ and $\beta$ are of form $(.,1)$
which implies $\mu_3 = (.,0)$. So $\mu_3$ is always $(0,0)$ and $\alpha = \beta$. Also $\mu_1 = (1, .)$.

Propagation of values of flow in the case $\alpha = (0,1)$ is shown in Figure~\ref{fig:cube_flows}, on the left.
As $\mu_2 \neq (0,0)$, we have $\delta = (1, .)$.
The value
$x+y$ is 1 because $\gamma$ is forbidden to be $(0,0)$ but this forces $\mu_4 = (0,0)$ which is also forbidden.
In the case $\alpha=(1,1)$ (Figure~\ref{fig:cube_flows}, on the right), 
we again combine the forbidden values to give possible form for $\mu_2$ and $\delta$, and also $\omega_3$,
$\omega_4$. In particular $\omega_3 = \omega_4 \not\in \set{ (0,0), (1,1)}$, so we may write
$\omega_3 = (u, u+1)$.
The edge $\gamma$ forbids $x = y = 0$ and the edge $\mu_4$
forbids $x=1$, $y=0$, so $y = 1$ and $\mu_4 = (x+1, x)$. So either $\omega_1 = (u+x+1, u+x+1)$
or $\omega_2 = (u + x, u + x)$ are $(0,0)$. Hence no satisfying flow exists.
\end{proof}

\begin{figure}[h]
  \begin{center}
    \begin{minipage}{.45\textwidth}
      \center
      \small $\Z_4$: NO \hfil $\Z_2^2$: YES \\
      \includegraphics[width=.95\textwidth]{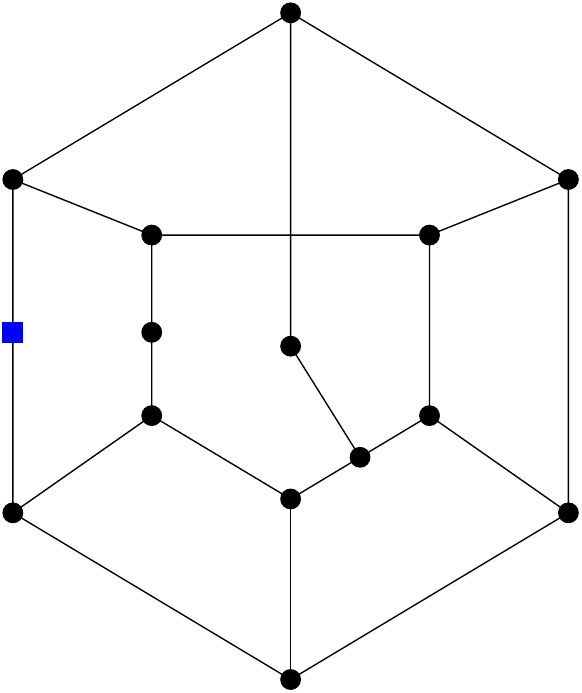}
    \end{minipage}
    \hfil\hfil
    \begin{minipage}{.45\textwidth}
      \center
      \small $\Z_4$: YES \hfil $\Z_2^2$: NO \\
      \includegraphics[width=.95\textwidth]{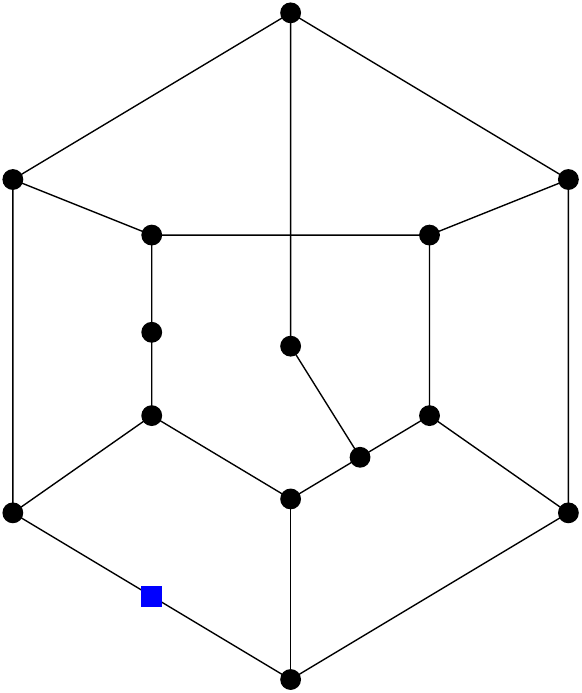}
    \end{minipage}
  \end{center}
  \caption{Graphs proving Theorem~\ref{main} \label{fig:graphs}}
\end{figure}

\FloatBarrier
\section{Group connectivity testing} \label{sec:algo}

We fix a digraph~$G=(V,E)$. We let $n$ be the number of vertices and 
$m$ the number of edges of~$G$.

\begin{notation}
We say that a flow $\varphi\colon E \to \Gamma$ {\em satisfies} a mapping of forbidden values
$h\colon E \to \Gamma$ if for every $e \in E$ it holds $h(e) \neq \varphi(e)$.
\end{notation}

The most straightforward way of testing whether a graph is $\Gamma$-connected,
is using the definition: We can enumerate all $h\colon E \to \Gamma$ assignments of
forbidden values and for each of them (try to) find a satisfying flow. Finding
a satisfying flow by itself is a hard problem: A cubic graph has nowhere-zero
$\Z_4$-flow (equivalently, $\Z_2^2$-flow) if and only if it has an edge 3-coloring.
Testing the edge 3-colorability of cubic graphs was shown to be NP-complete by Holyer~\cite{Ho81}.

An easy observation about the structure of forbidden assignments is:

\begin{obs}\label{obs:forb_assgn}
Let $h, h'\colon E \to \Gamma$ be assignments of the forbidden values such that
$h' - h = \Delta$ is a flow. Then $h$ is satisfied by a flow $\varphi$ if and only
if $h'$ is satisfied by $\varphi + \Delta$.
\end{obs}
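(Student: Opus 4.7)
The plan is to give a direct edge-by-edge verification, since this observation is essentially a restatement of the fact that the space of $\Gamma$-flows is closed under addition combined with the translation-invariance of the inequality $h(e) \neq \varphi(e)$ under shifting both sides by the same group element.

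First I would observe that $\varphi + \Delta$ is itself a flow: this is immediate because both $\varphi$ and $\Delta$ satisfy Kirchhoff's law at every vertex, and Kirchhoff's law is linear, so any sum of flows is again a flow (this uses that $\Gamma$ is abelian). Thus the map $\varphi \mapsto \varphi + \Delta$ is a bijection on the set of $\Gamma$-flows on $G$, with inverse $\psi \mapsto \psi - \Delta$.

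Next I would check the ``satisfies'' condition. For every edge $e \in E$, we compute
\[
  h'(e) - (\varphi(e) + \Delta(e)) = \bigl(h(e) + \Delta(e)\bigr) - \varphi(e) - \Delta(e) = h(e) - \varphi(e),
\]
so $h'(e) \neq \varphi(e) + \Delta(e)$ if and only if $h(e) \neq \varphi(e)$. Applying this equivalence at every edge gives the claim: $\varphi$ satisfies $h$ if and only if $\varphi + \Delta$ satisfies $h'$.

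There is no real obstacle here; the only point worth stating carefully is that we are using the group structure twice (once to add the two flows, once to cancel $\Delta(e)$ on both sides of the inequality). The useful consequence, which motivates the observation and presumably drives the algorithm of Section~\ref{sec:algo}, is that the set of forbidden assignments satisfiable by some flow is a union of cosets of the flow space, so it suffices to test one representative per coset rather than all $|\Gamma|^m$ assignments~$h$.
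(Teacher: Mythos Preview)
Your proof is correct. The paper states this as an observation without proof, and your edge-by-edge verification that $h'(e) - (\varphi+\Delta)(e) = h(e) - \varphi(e)$ is exactly the (trivial) argument one would supply; there is nothing to compare.
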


\begin{dfn}
We say that assignments of forbidden values $h, h'\colon E \to \Gamma$
are flow-equivalent, denoted $h \sim_f h'$, if and only if $h' - h$ is a flow.
\end{dfn}

Hence we can split all assignments of the forbidden values into equivalence classes
of $\sim_f$ and test existence of a satisfying flow only for one member of each class.
This improves algorithm from finding
$\abs{\Gamma}^{m}$ flows to finding $\abs{\Gamma}^{n - 1}$ flows
(because every equivalence class is uniquely determined by an assignment of forbidden values
which is 0 outside of some fixed spanning tree).

A bit smarter algorithm -- used to find $\Z_2^2$-connected graphs which are not
$\Z_4$-connected -- can be obtained by looking
at Observation~\ref{obs:forb_assgn} the other way around. It follows that each
equivalence class of $\sim_f$ is exactly a coset generated by adding some its fixed member to all flows.
Therefore if an equivalence class $[x]_{\sim_f}$ is satisfied then for every flow $\varphi$ there is $h \in [x]_{\sim_f}$
such that $\varphi$ satisfies $h$.

\begin{theorem}
Fix a digraph $G$ and an abelian group $\Gamma$. Let $x\colon E \to \Gamma$ be
a forbidden mapping. The following statements are equivalent:
\begin{enumerate}
  \item Forbidden mapping $x$ is satisfied.
  \item Every $y \in [x]_{\sim_f}$ is satisfied.
  \item For every flow $\varphi$, there exists $y \in [x]_{\sim_f}$ satisfied by $\varphi$.
\end{enumerate}
\end{theorem}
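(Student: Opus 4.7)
The plan is to notice that $[x]_{\sim_f}$ is a coset $x+F$ of the group $F$ of all $\Gamma$-flows on $G$, and then obtain all three equivalences by translating by suitable elements of $F$. First I would check (very briefly) that $\sim_f$ is indeed an equivalence relation, so that talking about the class $[x]_{\sim_f}=\{x+\Delta:\Delta\in F\}$ makes sense.

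For $(1)\Leftrightarrow(2)$ I would argue directly from Observation~\ref{obs:forb_assgn}: the direction $(2)\Rightarrow(1)$ is immediate because $x\in[x]_{\sim_f}$, and for $(1)\Rightarrow(2)$ a flow $\psi$ satisfying $x$ gives a flow $\psi+\Delta$ satisfying every $y=x+\Delta\in[x]_{\sim_f}$.

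For $(1)\Leftrightarrow(3)$ I would use the same translation trick. Assuming (1), pick $\psi$ satisfying $x$; for an arbitrary flow $\varphi$, set $\Delta:=\varphi-\psi$ (a flow) and $y:=x+\Delta\in[x]_{\sim_f}$. The identity $\varphi-y=\psi-x$ is nowhere zero, so $\varphi$ satisfies $y$, giving (3). Conversely, specializing (3) to the zero flow $\varphi\equiv 0$ produces some $y=x+\Delta\in[x]_{\sim_f}$ with $y$ nowhere zero, so the flow $\psi:=-\Delta$ satisfies $\psi(e)=x(e)-y(e)\neq x(e)$ at every edge, giving (1).

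The main thing to notice---rather than a genuine obstacle---is that statement (3), which asks only that every flow satisfies \emph{some} representative of the class, is no weaker than the seemingly much stronger statement (2). This collapses because of the coset structure on forbidden assignments: any two pairs (satisfying flow, forbidden map in the class) are translates of one another by a single element of $F$, so fixing either coordinate determines the other up to a translation that stays inside the coset.
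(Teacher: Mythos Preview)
Your argument is correct and matches the paper's proof: both reduce $(1)\Leftrightarrow(2)$ to Observation~\ref{obs:forb_assgn} and handle $(1)\Leftrightarrow(3)$ by the same translation trick, observing that if $\psi$ satisfies $x$ then $\varphi$ satisfies $x-\psi+\varphi\in[x]_{\sim_f}$, and conversely. The only cosmetic difference is that for $(3)\Rightarrow(1)$ you specialize to the zero flow, whereas the paper keeps $\varphi$ arbitrary and notes that $\varphi-y+x$ then satisfies $x$; both are the same computation.
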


\begin{proof}
Equivalence of first two is Observation~\ref{obs:forb_assgn}. For item three we fix a flow $\varphi_x$
satisfying $x$. Then flow $\varphi$ satisfies forbidden mapping $x - \varphi_x + \varphi$. And vice versa
if $\varphi$ satisfies $y$ then $x$ is satisfied by $\varphi - y + x$.
\end{proof}

So we can fix a flow -- constant-zero flow being the obvious candidate -- and
for each equivalence class we test whether some of its members is satisfied by it. This
increases the number of tests back to $\abs{\Gamma}^{m}$ but now each test is
just a simple comparison instead of an NP-complete problem.

We can also trade some
space for time: We keep a table of all equivalence classes, and instead of enumerating members
of all equivalence classes, we enumerate all assignments of forbidden values that are satisfied
by the given flow. For each of them we determine its equivalence class and mark that class as satisfied.
After enumerating them all we just check whether every equivalence class is satisfied.
This decreases the number of enumerated elements to $(\abs{\Gamma} - 1)^m$ but
consumes extra $2^{n - 1}$ bits of memory.

Because we were testing subdivisions of cubic graphs we would like to optimize
cases of once and twice subdivided edges. Without any additional optimization each
subdivision of an edge increases the number of edges by one and hence slows down
the described method by the factor of $\abs{\Gamma} - 1$. But
a subdivision creates an edge 2-cut.

Without loss of generality we may assume that edges of a 2-cut -- denote them $e_1$ and
$e_2$ -- are oriented in opposite
directions. The value of any flow must be the same on both of them. Hence swapping the forbidden
values for edges $e_1$ and $e_2$ does not change the set of satisfying flows. Moreover,
we may assume that forbidden values for $e_1$ and $e_2$ are different because it
is more restrictive than the case when they are the same. This reduces the number of
cases from $\abs{\Gamma}^2$ to ${\abs{\Gamma} \choose 2}$ (\ie from 16 to 6 for
groups of order four).
Double subdivision is in our case even simpler because we have
three forbidden values and again the most restrictive case is when they all are distinct.
So such double-subdivided edge has only one possible value (in our case, where $|\Gamma|=4$). 

Now we need to plug
this observations into above-described algorithm.
Observe that the equivalence classes used in the algorithm do not have
to be equivalence classes of $\sim_f$
but we can use classes of any equivalence $\sim$ which is congruence with respect
to satisfiability and which is coarser than $\sim_f$. Being congruence with
respect to satisfiability means that either all elements of equivalence class
are satisfiable or none of them is. Being coarser than $\sim_f$ ensures
that $[x]_{\sim_f} \subseteq [x]_{\sim}$ and so if class $[x]_{\sim}$ is
satisfiable that for every flow $\varphi$ there is some $y \in [x]_{\sim}$
satisfied by $\varphi$.
Moreover, we can throw away equivalence classes that are satisfied if some other class is satisfied
(of course without creating cycles).
E.g.~if we have 2-cut with both forbidden values being $1$, then this case is implied
by the case with value $1$ and any other value.

\begin{notation}
We let $[A \to B]$ denote the set of all functions from $A$ to $B$.
\end{notation}

We summarize our approach in Algorithm~\ref{alg} and Theorem~\ref{thm:alg}.
We also need to work with equivalence classes in the algorithm, so we represent the equivalence
with throw-away class as a function $$\mathcal C\colon [E \to \Gamma] \to X \uplus \set{\NULL}$$
which assigns to each forbidden mapping an object representing its class (in practical
implementation elements of $X$ are just small integers), $\NULL$ representing the throw-away class.

Function $\mathcal C$ we used is obtained from $\sim_f$ by following modifications:
For each 2-cut we remove all classes (\ie we set values of their elements to \NULL) that forbid the same value
on both edges of the cut and merge classes which differ only by swapping
values on edges of the cut. For double-subdivided edges we remove all classes that
do not forbid three different values on each double-subdivided edge and than merge all classes
that differ only by the order of forbidden values on given subdivided edge. We note that the optimization
for double-subdivided edges is essentially equivalent to removing given subdivided edge:

\begin{obs}
  If graph $G$ contains an edge subdivided $\abs{\Gamma}$ times, it cannot be $\Gamma$-connected.
  If it contains an edge $e$ subdivided $\abs{\Gamma}-1$ times, it is $\Gamma$-connected if and only if
  $G - e$ is $\Gamma$-connected.
\end{obs}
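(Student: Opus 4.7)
The first claim is a direct pigeonhole argument. If $G$ contains an edge subdivided $\abs{\Gamma}$ times, then that edge appears in $G$ as a path $P$ with edges $e_1, \dots, e_{\abs{\Gamma}}$ whose internal vertices all have degree two; by Kirchhoff's law at those vertices, every flow is constant along $P$. Choose $h$ so that $h|_{E(P)}$ is a bijection onto $\Gamma$ (and extend $h$ arbitrarily elsewhere). Then the common flow value on $P$ would have to avoid every element of $\Gamma$, which is impossible; hence $G$ is not $\Gamma$-connected.

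For the second claim I use the equivalent surplus-based formulation of $\Gamma$-connectivity: $G$ is $\Gamma$-connected iff for every $h \colon E \to \Gamma$ and every $b \colon V \to \Gamma$ with $\sum_V b = 0$, there is $\varphi \colon E \to \Gamma$ with $\partial \varphi = b$ and $\varphi \neq h$ pointwise. Let $P$ be the subdivided path, with edges $e_1, \dots, e_{\abs{\Gamma}-1}$, endpoints $u, v$, and internal vertices $w_1, \dots, w_{\abs{\Gamma}-2}$, and let $G' := G - e$. For the forward direction, given an arbitrary $h'$ and a compatible surplus $b'$ on $G'$, extend them to $G$ by letting $h|_{E(P)}$ be a bijection between $E(P)$ and $\Gamma \setminus \set{0}$ and setting $b(w_i) := 0$; these choices force the common value of any satisfying flow on $P$ to be $0$. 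Applying $\Gamma$-connectivity of $G$ yields such a flow, whose restriction to $G'$ realizes $b'$ (the zero path flow contributes nothing at $u$ or $v$) and avoids $h'$.

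For the reverse direction, take arbitrary $h$ and $b$ on $G$. Any candidate flow on $P$ is fully determined by $y := \varphi(e_1)$, since Kirchhoff at each $w_i$ propagates $\varphi(e_{i+1}) = \varphi(e_i) - b(w_i)$. The $\abs{\Gamma}-1$ edge-avoidance constraints on $P$ forbid at most $\abs{\Gamma}-1$ values of $y$, so at least one admissible $y_0$ remains; fix it and determine the induced path values. Define an adjusted surplus $b'$ on $G'$ by $b'(u) := b(u) + y_0$, $b'(v) := b(v) - \varphi(e_{\abs{\Gamma}-1})$, and $b'(x) := b(x)$ elsewhere; the telescoping identity $\sum_i b(w_i) = y_0 - \varphi(e_{\abs{\Gamma}-1})$ then forces $\sum_{V(G')} b' = 0$. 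Now $\Gamma$-connectivity of $G'$ applied to the pair $(h|_{E(G')}, b')$ produces a flow on $G'$ which, combined with the path values determined by $y_0$, is the required flow on $G$.

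The only real technicality is the surplus bookkeeping at $u$ and $v$; once the parameterization of $P$-flows by a single scalar $y$ is in place, the rest is routine.
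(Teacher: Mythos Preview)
Your argument is correct. The paper states this Observation without proof, so there is no original argument to compare against; your use of the surplus formulation (which the paper explicitly endorses in the introduction) gives a clean and complete justification. The pigeonhole for the first claim and the single-parameter sweep over $y=\varphi(e_1)$ for the second are exactly the right moves, and your bookkeeping for $b'(u)$, $b'(v)$ and the telescoping check $\sum_i b(w_i)=y_0-\varphi(e_{|\Gamma|-1})$ is accurate.

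One small remark worth making explicit: you have read ``an edge subdivided $k$ times'' as a path with $k$ edges. That is the only reading under which the second assertion holds; under the more common convention ($k$ new vertices, hence $k{+}1$ edges) the path would have $|\Gamma|$ edges and one could forbid all of $\Gamma$ along it, so $G$ would never be $\Gamma$-connected regardless of $G-e$. Your reading is the intended one, consistent with the sentence immediately preceding the Observation (for $|\Gamma|=4$, the ``double-subdivided'' three-edge path is what is meant by the $|\Gamma|-1$ case).
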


\def\true{\ensuremath{\text{\tt true}}}
\def\false{\ensuremath{\text{\tt false}}}

\begin{algorithm}
  \KwIn{graph $G$}
  \KwOut{{\tt YES} if $G$ is $\Gamma$-connected, and {\tt NO} otherwise}
  \BlankLine

  Pick a flow $\varphi_0$\;
  Create array $a$ indexed by elements of $X$\;
  $a[*] \leftarrow \false$\;
  \BlankLine

  \For{$\forall h$ {\rm satisfied by} $\varphi_0$ {\rm such that} $\mathcal C(h) \neq \NULL$}{
    $a[\mathcal C(h)] \leftarrow \true$\;
  }
  \BlankLine

  \For{$\forall x \in X$}{
    \lIf{$a[x] = \false$}{\Return{{\tt NO}}}
  }
  \BlankLine

  \Return{{\tt YES}}\;

  \caption{Group connectivity testing \label{alg}}
\end{algorithm}

\begin{theorem} \label{thm:alg}
  Fix an abelian group $\Gamma$, a digraph $G$, and a function
  $\mathcal C\colon [E \to \Gamma] \to X \uplus \set{\NULL}$ such that:
  \begin{enumerate}
    \item $X \subseteq \mathcal C [[E \to \Gamma]]$, \label{item:dom}
    \item for all $h\colon E \to \Gamma$ if $\mathcal C(h) = \NULL$ then there exits $h'\colon E \to \Gamma$
      such that if $h'$ is satisfied then $h$ is also satisfied
      and $\mathcal C(h') \neq \NULL$, \label{item:almost_equiv}
    \item for all $h, h'\colon E \to \Gamma$ if $\mathcal C(h) = \mathcal C(h')$ then either both are satisfied
      or none of them is, and \label{item:congruence}
    \item for all $h\colon E \to \Gamma$ and for all $\Gamma$-flows $\varphi$ holds
      $\mathcal C(h) = \mathcal C(h + \varphi)$.
      \label{item:coarser}
  \end{enumerate}

  Then Algorithm~\ref{alg} correctly decides whether $G$ is $\Gamma$-connected.
\end{theorem}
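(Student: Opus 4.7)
The plan is to split the verification into the two directions of the biconditional that defines correctness: if Algorithm~\ref{alg} outputs {\tt YES} then $G$ is $\Gamma$-connected, and conversely if $G$ is $\Gamma$-connected then the algorithm outputs {\tt YES}. The whole argument is essentially bookkeeping on top of the earlier theorem (which is the $\mathcal C = {\sim_f}$ special case): conditions~\ref{item:congruence} and~\ref{item:coarser} together say that $\mathcal C$ is a congruence for satisfiability that is at least as coarse as~$\sim_f$, while conditions~\ref{item:dom} and~\ref{item:almost_equiv} control the quotient so that the loop actually inspects every relevant class and NULL classes are covered by non-NULL ones.

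For the first direction, assume the algorithm returns {\tt YES}, \ie $a[x] = \true$ for every $x \in X$, and fix an arbitrary $h\colon E \to \Gamma$; I want to show $h$ is satisfied. If $\mathcal C(h) = \NULL$, condition~\ref{item:almost_equiv} gives an $h'$ with $\mathcal C(h') \neq \NULL$ whose satisfiability implies that of~$h$, so I may replace $h$ by $h'$ and assume $\mathcal C(h) \neq \NULL$. Since the codomain of $\mathcal C$ is $X \uplus \{\NULL\}$, we have $x := \mathcal C(h) \in X$. Because $a[x] = \true$, the loop must have encountered some $h''$ that is satisfied by $\varphi_0$ and has $\mathcal C(h'') = x$. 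Condition~\ref{item:congruence} then forces $h$ to be satisfied as well.

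For the reverse direction, assume $G$ is $\Gamma$-connected and pick any $x \in X$; I need to show that $a[x]$ gets set to \true. By condition~\ref{item:dom} there is some $h$ with $\mathcal C(h) = x$. By $\Gamma$-connectivity there is a flow $\psi$ satisfying $h$. Set $\Delta := \varphi_0 - \psi$, which is again a $\Gamma$-flow, and put $h^\star := h + \Delta$. A direct check shows that $\varphi_0$ satisfies $h^\star$ iff $\psi$ satisfies $h$, so $\varphi_0$ satisfies $h^\star$; and by condition~\ref{item:coarser}, $\mathcal C(h^\star) = \mathcal C(h) = x \neq \NULL$. Hence the loop iteration corresponding to $h^\star$ sets $a[x] \leftarrow \true$, and the algorithm returns {\tt YES}.

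No step is genuinely hard; the only place that requires a moment of care is the reverse direction, where one must manufacture, from an arbitrary witness flow $\psi$, a forbidden mapping in the same $\mathcal C$-class that is witnessed specifically by the fixed flow $\varphi_0$ used by the algorithm. This is exactly the translation step $h \mapsto h + (\varphi_0 - \psi)$ already used in the proof of the preceding theorem, and condition~\ref{item:coarser} is tailored to make it go through.
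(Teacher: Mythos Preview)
Your proof is correct and matches the paper's argument in substance: both directions hinge on the same two moves, namely using condition~\ref{item:almost_equiv} to reduce NULL classes to non-NULL ones together with condition~\ref{item:congruence} to transfer satisfiability within a class, and using the translation $h \mapsto h + (\varphi_0 - \psi)$ together with condition~\ref{item:coarser} to convert an arbitrary satisfying flow into one witnessed by~$\varphi_0$. The only cosmetic differences are that you argue directly while the paper phrases both directions by contradiction, and you treat the two implications in the opposite order.
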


\begin{proof}
Obviously, Algorithm~\ref{alg} terminates.

First we prove that if the graph is $\Gamma$-connected, then the
algorithm outputs {\tt YES}. By contradiction, let $x \in X$ be the element that forced algorithm to output
{\tt NO}. Let $P = \mathcal C^{-1}(x)$ be set of preimages of $x$. It is nonempty due to Assumption~\ref{item:dom},
so we can fix some $p \in P$. The mapping $p$ is satisfied by some flow $\varphi_p$ because
$G$ is $\Gamma$-connected. The mapping $p' = p - \varphi_p + \varphi_0$ is satisfied by flow $\varphi_0$
(Observation~\ref{obs:forb_assgn}). Also $\mathcal C(p') = \mathcal C(p) = x$ (Assumption~\ref{item:coarser}),
so mapping $p'$ was enumerated by the algorithm and set $a[x]$ to \true. Contradiction.

Now we prove that if the algorithm outputs {\tt YES}, the graph $G$ is $\Gamma$-connected.
By contradiction, let $p\colon E \to \Gamma$ be a mapping witnessing that $G$ is not $\Gamma$-connected.
If $\mathcal C(p) = \NULL$, Assumption~\ref{item:almost_equiv} gives us $p'$ which is also unsatisfied and
$\mathcal C(p') \neq \NULL$, otherwise we take $p' = p$.
Because $\mathcal C(p') \neq \NULL$, none of the mappings in the set $\mathcal C^{-1}(\mathcal C(p'))$ is satisfied
(Assumption~\ref{item:congruence}). Hence $a[C(p')]$ was never set to $\true$, and the algorithm must have returned
{\tt NO}. Contradiction.
\end{proof}

\section{Implementation notes} \label{sec:impl}

Because large part of our work was creating programs for testing group connectivity, we would like
to add some implementation notes. Readers interested only in theoretical results may safely skip
this section. 

Our first implementation of straightforward algorithm
was written by the second author during her master thesis work. It was
a C++ implementation which was very specialized for the graphs tested (subdivisions of cube),
and a CSP implementation in Sicstus Prolog to double-check the results. Both of these implementations
required preprocessed input which made them less than ideal to work with, and also was not
fast enough for searching through larger graphs.

Hence we have written a new implementation based on Algorithm~\ref{alg}
in Python~2 built on Sage libraries which already contain
a lot of tools to work with general graphs.\footnote{
We used version 2 of Python because Sage was not yet ported to Python 3.
} Because Python is an interpreted language and as such is slower,
we chose to implement performance critical parts of code in C++ binding them into Python using
Cython.\footnote{ Do not mistake with CPython -- CPython is reference implementation
of Python interpreter, whereas Cython is optimizing compiler of Python which compiles
Python into C (or C++) and then into native code  using standard compiler like gcc.}

At the end of the previous section we have described function $\mathcal C$
that we are using, but we did not specify
how to calculate it. The main idea is to fix a spanning tree and transform any forbidden mapping
to an equivalent one which is zero outside this tree. To do so we keep a precalculated list of
elementary flows. We also need to take care of merged classes created by (doubly-)subdivided edges.
For doubly subdivided edges we always assign them the only interesting forbidden values
(and remove them from generation of forbidden mappings). For single subdivisions we keep list
of six interesting assignments and assign subdivided edges only values from this list.
Effect of these optimization is shown in Table~\ref{fig:timeit}.

\begin{table}
  \centering
  \caption{Time required to test cube subdivided on 2 edges (all 9 possibilities).}
  \label{fig:timeit}
  \smallskip
  \begin{tabular}{lc} \hline
    Algorithm & Time [s] \\ \hline
    Simple (in Python)  & 48.8 \\
    Smart (in C++)    & 3.65 \\
    Smart with subdivision optimization & 0.229 \\ \hline
  \end{tabular}
  \smallskip

  \small Measured on Intel i5 5257U.
\end{table}

To double-check our results we also implemented the straightforward algorithm in pure Python.  It is called Simple algorithm in Table~\ref{fig:timeit}. It
does just check the definition -- for every forbidden assignment (fixed outside of
a spanning tree) it finds a satisfying flow (from precomputed list of flows).
A repository with both implementations may be found at our department's GitLab
$$\text{\url{https://gitlab.kam.mff.cuni.cz/radek/group-connectivity-pub}.}$$

\section{Conclusions and open problems}

We have found graphs that show that $\Z_2^2$- and $\Z_4$-connectivity are independent notions. 
All of the graphs that we have found to certify this do have vertices of degree~2. Therefore, it is natural to ask, 
whether such graphs exist that are 3-edge-connected (both, in cubic and general case). 

Another challenging task is to find a proof that does not use computers. The main obstacle is to find 
efficient techniques to show that a particular graph is $\Gamma$-connected. To prove the converse 
is much easier: we guess forbidden values~$h\colon E \to \Gamma$ and then show non-existence of a flow
(see Theorem~\ref{thm:non-conn}). 

Our final question is the complexity of testing group connectivity. The algorithm we have 
developed is fast enough for our purposes; the required time is exponential, however. 
To test for group connectivity seems harder than to test for existence of a nowhere-zero flow, which suggests 
the problem is NP-hard. In fact, we believe it is $\Pi^p_2$-complete.

Circumstantial evidence which suggests
$\Pi^p_2$-completeness are somewhat dual notions of choosability and group list-colorings.
Both of these problems are known to be $\Pi^p_2$-complete -- proved by Erd\H{o}s et al.~\cite{ERT}
for choosability, and Kráľ~\cite{Dan05,Dan04} for group list-colorings.
Of those two, group list-colorings are closer match to dual of group connectivity, but graphs used in
Kráľ's proofs are non-planar, and we found no way to work around it.
So for testing group connectivity we do not know any hardness results.

\section{Acknowledgements} 
The first and the last author were partially supported by GAČR grant 16-19910S.
The first author was partially supported by the Charles University, project GA UK No.~926416.

\bibliographystyle{amsplain}
\bibliography{group_con}

\end{document}